\definecolor{darkblue}{rgb}{0.0,0.0,0.3}
\newtheorem{definitionenv}{Definition}
\newtheorem{remarkenv}[definitionenv]{Remark}
\newtheorem{exampleenv}{Example}
\newtheorem{mydef}{Definition}
\newtheorem{mytheorem}{Theorem}
\newtheorem{mylemma}{Lemma}
\newcommand{\bes} {\begin{subequations}}
\newcommand{\ees} {\end{subequations}}
\newcommand{\bea} {\begin{eqnarray}}
\newcommand{\eea} {\end{eqnarray}}
\newcommand{\beq}{\begin{equation}}
\newcommand{\beqs}{\begin{equation*}}
\newcommand{\eeq}{\end{equation}}
\newcommand{\eeqs}{\end{equation*}}
\newcommand{\ignore}[1]{}
\def\>{\rangle}
\def\<{\langle}
\newcommand{\ket}[1]{|#1\rangle}
\newcommand{\cG}{\mathcal{G}}
\begin{document}
\title{Depth reduction for quantum Clifford circuits through  Pauli measurements}
\author{Yi-Cong Zheng}

\email{
zheng.yicong@quantumlah.org}
\affiliation{Centre for Quantum Technologies, National University of Singapore, Singapore 117543}
\affiliation
{Yale-NUS College, Singapore 138527}

\author{Ching-Yi Lai}
\affiliation{Institute of Information Science, Academia Sinica, Taipei 11529, Taiwan}
\author{Todd A. Brun}
\affiliation{Ming Hsieh Department of Electrical Engineering, Center for Quantum Information Science and Technology, University of Southern California, Los Angeles, California 90089, USA\\}
\author{Leong-Chuan Kwek}
\affiliation{Centre for Quantum Technologies, National University of Singapore, Singapore 117543}
\affiliation{MajuLab, CNRS-UNS-NUS-NTU International Joint Research Unit, UMI 3654, Singapore}
\affiliation{
Institute of Advanced Studies, Nanyang Technological University, Singapore 639673}
\affiliation{National Institute of Education, Nanyang Technological University, Singapore 637616 }
\date{\today}

\begin{abstract}
Clifford circuits play an important role in quantum computation. Gottesman and Chuang proposed a gate teleportation protocol so
that a quantum circuit can be implemented by the teleportation circuit with specific ancillary qubits.
In particular, an $n$-qubit Clifford circuit $U$ can be implemented by preparing an ancillary stabilizer state  $(I\otimes U)|\Phi^+\>^{\otimes n}$  for teleportation and doing a Pauli correction
conditioned on the measurement.
In this paper, we provide an alternative procedure to implement a Clifford circuit through Pauli measurements, by preparing $O(1)$ ancillas that are Calderbank-Shor-Steane (CSS) stabilizer states.
That is to say, $O(1)$ CSS states are sufficient to implement any Clifford circuit. As an application to fault-tolerant quantum computation, any Clifford circuit can be implemented by $O(1)$ steps of Steane syndrome extraction
if clean CSS stabilizer states are available.

\end{abstract}
\maketitle


\section{Introduction}
In quantum information and computation, the class of \emph{stabilizer circuits} can be efficiently simulated by classical computers~\cite{GK98} using the \emph{stabilizer formalism}~\cite{Gottesman:9705052}.
Stabilizer circuits are composed solely of Hadamard ({\rm H}), Phase ({\rm P}), and controlled-NOT ({\rm CNOT}) gates, defined as
\begin{equation*}
\text{H} = \frac{1}{\sqrt{2}}\left(
             \begin{array}{cc}
               1 & 1 \\
               1 & -1 \\
             \end{array}
           \right),\
\text{P} = \left(
             \begin{array}{cc}
               1 & 0 \\
               0 & i \\
             \end{array}
           \right),\
{\small \text{CNOT} = \left(
                \begin{array}{cccc}
                  1 & 0 & 0 & 0 \\
                  0 & 1 & 0 & 0 \\
                  0 & 0 & 0 & 1 \\
                  0 & 0 & 1 & 0 \\
                \end{array}
              \right),}
\end{equation*}
and  single-qubit Pauli measurements. A stabilizer circuit is called a \emph{Clifford circuit} if it contains no measurements,
and Hadamard, Phase, and CNOT gates are called \emph{Clifford gates}. It is known that the Clifford gates, together with any non-Clifford gate, form a universal set for quantum computation~\cite{Shi03}. Stabilizer circuits are especially important in fault-tolerant quantum computation (FTQC) for encoding, decoding, and error correction circuits~\cite{Nielsen:2000:CambridgeUniversityPress,Shor:1996:56,Aharonov:1997:176,Folwer2012PhysRevA.86.032324}, along with other applications, such as
evaluation of the average gate fidelity via randomized benchmarking~\cite{knill2008randomized,magesan2011scalable}, and efficient quantum simulations~\cite{Lloyd:1996:1073,aspuru2005simulated,wecker2014gate,Hastings:2015_simulation,Poulin:2015qic_simulation}.

It is well known that any $n$-qubit (unitary) quantum circuit $U$ of a certain level of the Clifford hierarchy can be implemented by \emph{gate teleportation}~\cite{Gottesman:1999:390}, which requires a $2n$-qubit ancilla state
\begin{align}
\left| \Phi_U^n \right\>=I\otimes U\left(\frac{|00\>+|11\>}{\sqrt{2}}\right)^{\otimes n}, \label{eq:UEPR}
\end{align}
and is performed by a \emph{single} step of Bell basis measurements followed by a controlled-correction circuit at a lower level of the Clifford hierarchy.
For a Clifford circuit $U$, $\left| \Phi_U^n \right\>$ is a \emph{stabilizer state}, which is a joint-$(+1)$ eigenvector of $2n$ commuting Pauli operators, called \emph{stabilizer generators}. The controlled-correction is simply a Pauli operator.
Thus, the complexity of a Clifford circuit is dominated by the preparation of $\left| \Phi_U^n \right\>$,
which can be prepared by measuring the stabilizer generators on $n$ EPR pairs (up to a Pauli correction).

At first sight, it seems as difficult to prepare  such an ancilla state   as to directly implement the circuit. However, in the case of FTQC, it is possibly easier to prepare specific known states for gate teleportation than to do gate operations on unknown states. One important  example is the \emph{magic state distillation} for the fault-tolerant implementation of non-Clifford gates~\cite{Bravyi:2005:022316,Bravyi_Haah_PhysRevA.86.052329}.
In some cases,  it may even be impossible to do gate operations directly on the qubits.  For example, for a FTQC scheme using multi-qubit quantum error-correcting codes~\cite{steane1999efficient_Nature,Steane:2003:042322,brun2015teleportation},  typically,  its fault-tolerant logical Clifford gates, if they exist, are computationally difficult to find.
Therefore we would like to investigate the implementation of stabilizer circuits by variants of gate teleportation in FTQC.

Consider an $n$-qubit Clifford circuit $U$. Previously, Gottesman and Chuang showed that the ancillary state $\ket{\Phi_U^n}$
can be fault-tolerantly prepared by a sequence of $O(n)$ fault-tolerant Pauli operator measurements, with error correction and verification inserted between each two consecutive measurements~\cite{Gottesman:1999:390}. This preparation is \emph{passive}  in that most of the procedure is error detection.
We will show that to implement a Clifford circuit, it suffices to do $O(1)$  gate teleportations with (clean) ancillas that are \emph{Calderbank-Shor-Steane (CSS) stabilizer states} (up to single-qubit Clifford gate operations), and hence can be fault-tolerantly prepared~\cite{Ancilla_distillation_1,zheng2017efficient}.
(A CSS state is defined by a set of stabilizer generators, each of which  can be  chosen to be the tensor product of identity and either $X$
or $Z$ Pauli operators.)
These ancilla states are thus equivalent to two-colorable graph states~\cite{chen2004multi}.

Our idea is motivated by Clifford circuit synthesis~\cite{aaronson2004improved,maslov2017Bruhat}. Aaronson and Gottesman showed that any Clifford circuit is equivalent to a circuit that contains 11 stages of computation in the sequence -H-C-P-C-P-C-H-P-C-P-C-~\cite{aaronson2004improved}, where {-H-,} {-P-,} and -C- stand for stages composed of only  Hadamard, Phase, and CNOT gates, respectively~\footnote{Consequently, any stabilizer circuit can be decomposed into $O(n^2/\log n)$ Clifford gates~\cite{markov2008optimal,aaronson2004improved} with circuit depth $O(n)$~\cite{kutin2007computation} or $O(n^2)$ Clifford gates with circuit depth $O(\log n)$~\cite{MN01}.}.
Recently, Maslov and  Roetteler  found that a Clifford circuit can be decomposed as a 9-stage sequence -C-P-C-P-H-P-C-P-C-~\cite{maslov2017Bruhat}.
Therefore, it suffices to implement each of the {-H-,} {-P-,} and -C- stages of the 11-stage or 9-state sequence for a Clifford circuit.
In FTQC, it is straightforward to combine Knill syndrome extraction~\cite{Knill2005} with gate teleportation~\cite{Gottesman:1999:390},
and clearly $\ket{\Phi_U^n}$, where $U$ is a {-H-,} {-P-,} or {-C-} circuit, is a CSS state up to single-qubit Clifford operations. Consequently, we can prepare the ancilla states for the 9-stage or 11-stage sequence by distillation~\cite{Ancilla_distillation_1,zheng2017efficient}.

On the other hand, it is not so obvious how to combine Steane syndrome extraction~\cite{steane1997active} with  gate teleportation.
Since both Steane and Knill syndrome extraction have their own advantages, we would like also to derive a \emph{constant-depth} gate teleportation procedure for Steane syndrome extraction.
For example, we remark that measurements of logical Pauli operators can be implemented simultaneously with error correction in Steane syndrome extraction~\cite{steane1997active},
and consequently we can have stabilizer circuits implemented solely with Steane syndrome extraction. Moreover, Steane syndrome measurements may lead to higher thresholds for certain CSS codes.
In this paper, we can propose such a procedure for Steane syndrome extraction through a series of Pauli measurements that implement the 9-stage sequence with the help of appropriate ancilla states that are CSS states up to single-qubit Clifford operations.
Again, these states
can be fault-tolerantly prepared.
We will discuss the procedure at the logical level: the underlying quantum error-correcting codes can be either single-qubit codes or multiple-qubit codes.
If the underlying FTQC scheme is based on multi-qubit quantum error-correcting codes, we have a roughly constant resource overhead~\cite{Ancilla_distillation_1,zheng2017efficient}.





The paper is organized as follows. We review preliminary material in Sec.~\ref{sec:prim}, including the stabilizer formalism and the representation of Clifford circuits.
In Sec.~\ref{sec:measurement}, we propose a method to measure arbitrary Pauli operators, and give conditions when several Pauli operators can be measured simultaneously.
In
Sec.~\ref{sec:decomposition}, we explicitly show how to perform  Clifford circuits via constant steps of Pauli measurements. Conclusions and discussion of the method are presented in Sec.~\ref{sec:discussion}.


\section{Preliminaries}\label{sec:prim}
\subsection{Stabilizer formalism}
The Hilbert space of a single qubit is the two-dimensional complex vector space $\mathbb{C}^2$ with an orthonormal basis $\{|0\>, |1\>\}$. The Hilbert space of $N$-qubit states is hence $\mathbb{C}^{2^N}$.
Let $\mathcal{P}_N=\mathcal{P}_1^{\otimes N}$ denote the $N$-fold Pauli group, where
\begin{equation*}
\mathcal{P}_1=\{\pm I, \pm i I, \pm X, \pm i X, \pm Y, \pm i Y, \pm Z, \pm i Z\},
\end{equation*}
and $X={\footnotesize \left(
         \begin{array}{cc}
           0 & 1 \\
           1 & 0 \\
         \end{array}
       \right)}
$, $Z={\footnotesize \left(
         \begin{array}{cc}
           1 & 0 \\
           0 & -1 \\
         \end{array}
       \right)}$, and $Y=iXZ$ are the Pauli matrices.
Let $X_j$, $Y_j$, and $Z_j$ act as single-qubit Pauli matrices on the $j$th qubit and trivially elsewhere.
We also introduce the notation $X^{\mathbf a}$, for ${\mathbf a}=a_1\cdots a_N\in \mathbb{Z}_2^N$, to denote the operator $\otimes_{j=1}^N X^{a_j}$ and let $\text{supp}({\mathbf a})=\{j:a_j=1\}$.
For ${\mathbf a}, {\mathbf b}\in \mathbb{Z}_2^N$, denote the intersection of $\text{supp}({\bf a})$ and $\text{supp}({\bf b})$ by $\mathcal{I}_{\bf ab}$ and let $\tau_{\bf ab}=\left|\mathcal{I}_{\bf ab}\right|$.  An $N$-fold Pauli operator can be expressed as
\begin{equation}\label{eq:general_error}
i^l\cdot \bigotimes_{j=1}^N X^{a_j}Z^{b_j}=i^l X^{\bf a}Z^{\bf b}, \quad {\bf a},{\bf b}\in \mathbb{Z}^N_2, \ l\in\{0,1,2,3\}.
\end{equation}
Then $({\bf a}\,|\,{\bf b})$ is called the
\emph{binary representation} of the Pauli operator $i^lX^aZ^b$ up to an overall phase $i^l$. In particular, $\pm i^{\tau_{\bf ab}} X^{\bf a}Z^{\bf b}$ has eigenvalues $\pm 1$. From now on we use the binary representation, and we may neglect the overall phase for simplicity when there is no ambiguity.

For two Pauli operators $({\bf a}\, |\,  {\bf b})$ and $({\bf e}\, |\,  {\bf f})$, one can define their symplectic inner product:
\beqs
({\bf a}\, |\,  {\bf b}) J_N ({\bf e}\, |\,  {\bf f})^t=
\begin{cases}
&0, \quad \left[X^{\bf a}Z^{\bf b}, X^{\bf e}Z^{\bf f}\right]=0,\\
&1, \quad \left\{X^{\bf a}Z^{\bf b}, X^{\bf e}Z^{\bf f} \right\}=0,
\end{cases}
\eeqs
where
\begin{equation*}
 J_N =
\left(
\begin{array}{cc}
{\bf 0}_N & I_N \\
I_N & {\bf 0}_N \\
\end{array}
\right),
\end{equation*}
and $I_N$ and ${\bf 0}_N$ are the identity and zero matrices of dimension $N$, respectively. Here, $M^t$ denotes the transpose of $M$.

Consider a set of commuting Pauli operators $\{G_1,\dots, G_{s}\}$ that does not generate $-I^{\otimes N}$. These Pauli operators  generate an Abelian subgroup (stabilizer group) $\mathcal{G}$ of $\mathcal{P}_N$, and thus are called the stabilizer generators of $\mathcal{G}$.
Let $\mathcal{S}(\mathcal{G})$ denote the $2^{N-s}$-dimensional subspace of the $N$-qubit state space $\mathbb{C}^{2^N}$ fixed by $\mathcal{G}$,
which is the joint-$(+1)$ eigenspace of $G_1, \dots, G_{s}$.
Then for any $\ket{\psi}\in \mathcal{S}(\mathcal{G})$, one has $$G\ket{\psi}=\ket{\psi},$$ for all $G\in \mathcal{G}$. (Note that the overall phase of any $G\in\mathcal{G}$ can be $\pm 1$ only.)

A set of $s$ commuting $N$-fold Pauli operators has a binary representation as a matrix of the form:
\begin{equation*}
(A|B)= \left(
         \begin{array}{c|c}
           {\bf a}_1 & {\bf b}_1 \\
           \vdots & \vdots \\
           {\bf a}_s & {\bf b}_s \\
         \end{array}
       \right).
\end{equation*}

\begin{mydef}[Symplectic partner]\label{def:symplectic_partner}
For a set of $s$ commuting $N$-fold Pauli operators $(A|B)$, its symplectic partner $(E|F)$ is a  set of $s$ commuting  $N$-fold Pauli operators satisfying the orthogonality relation with respect to the symplectic inner product:
\beqs
(A|B)J_N(E|F)^t=I_s.
\eeqs
\end{mydef}
Note that if $(A|B)$ is of rank less than $N$, its symplectic partner is not unique.

\subsection{Clifford circuits}\label{sec:stabilizer_circuit}

An  $N$-qubit Clifford circuit can be represented by a $2N\times 2N$ binary matrix with respect to the basis of the binary representation of Pauli operators in (\ref{eq:general_error}).
For example, the idle circuit (no quantum gates) is represented by $I_{2N}$, the $2N\times 2N$  identity matrix.
The representation of consecutive Clifford circuits $M_1,\dots, M_j$  is their binary matrix product $ M_1 \cdots M_j$.

The $N$-qubit Clifford circuits form a finite group, which, up to overall phases, is isomorphic to the binary symplectic matrix group defined as follows:~\cite{aaronson2004improved}
\begin{mydef}[Symplectic group]\label{def:symplectic_group}
The group of $2N\times 2N$ symplectic matrix over $\mathbb{Z}_2$ is defined as:
\beqs
{\rm Sp}(2N, \mathbb{Z}_2)\equiv \{M\in {\rm GL}(2N, \mathbb{Z}_2): MJ_NM^t=J_N\}
\eeqs
under matrix multiplication.
\end{mydef}
In general, $M\in {\rm Sp}(2N, \mathbb{Z}_2)$ has the form
\beq
M=\left(
  \begin{array}{c|c}
    Q & R \\
    \hline
    \rule[0.4ex]{0pt}{8pt}
    S & T \\
  \end{array}
\right),  \label{eq:symplectic_matrix}
\eeq
where $Q$, $R$, $S$ and $T$ are $N\times N$ square matrices satisfying the following conditions:
\beq
QR^t=RQ^t, \quad ST^t = TS^t, \quad Q^tT + R^t S = I_N.
\eeq
In other words, $(Q|R)$ is a symplectic partner of $(S|T)$ by Def.~\ref{def:symplectic_partner}. Unlike  Ref.~\cite{aaronson2004improved}, here we omit the column vector that corresponds to the phases ($\pm 1$ only) of the operators. If needed, such overall phases can always be compensated by a single layer of gates consisting solely of $Z$ and $X$ gates~\footnote{Such extra layer has depth $O(1)$. Throughout the paper, Pauli gates are assumed to be free and can be directly applied to qubits. This is also true in FTQC using stabilizer codes, where logical Pauli operators are easy to realize. } on some subsets of qubits~\cite{aaronson2004improved,maslov2017Bruhat}.

Let $\text{C}(j,l)$  denote a CNOT gate with control qubit $j$ and target qubit $l$.
The actions of appending a Hadamard, Phase, or CNOT gate to  a Clifford circuit $M$ can be described as follows:
\begin{enumerate}
  \item A Hadamard gate on qubit $j$ exchanges columns $j$ and $N+j$ of $M$.
  \item A Phase gate on qubit $j$ adds column $j$ to column $N+j$ (modulo 2) of $M$.
  \item $\text{C}(j,l)$ adds column $j$ to column $l$ (modulo 2) of $M$ and adds column $N+l$ to column $N+j$ (modulo 2) of $M$.
\end{enumerate}

Now, consider a $2^k$ dimensional subspace $\mathcal{S}(\mathcal{G})$ of the $N$-qubit space, where $\cG$ has $k\leq N$ stabilizer generators.
$\mathcal{S}(\mathcal{G})$ encodes $k$ ``logical" qubits. We focus on the effects of Clifford circuits on these logical qubits in the stabilizer formalism. Consider a set of matrices $\textsf{C}_{\mathcal{G}}$ of the form:
\beq\label{eq:generalized_circuit matrix}
\left(
  \begin{array}{c|c}
    Q' & R' \\
    \hline
    \rule[0.4ex]{0pt}{8pt}
    S' & T'  \\
    \hline
    \rule[0.4ex]{0pt}{8pt}
    A & B
  \end{array}
\right).
\eeq
Here, $(A|B)$ corresponds to the stabilizer generators of $\mathcal{G}$;   $(Q'|R')$ and $(S'|T')$ are $k\times 2N$ binary matrices orthogonal to $(A|B)$ with respect to the symplectic inner product, and which are symplectic partners of each other. They can  be regarded as ``logical operators" on $\mathcal{S}(\mathcal{G})$.
We define the following equivalence relation $R$ in  $\textsf{C}_\mathcal{G}$: Two matrices
\beqs
{C}_1=\left(
  \begin{array}{c|c}
    Q'_1& R'_1 \\[2pt]
    \hline
    \rule[0.4ex]{0pt}{8pt}
    S'_1 & T'_1  \\[2pt]
    \hline
    \rule[0.4ex]{0pt}{8pt}
    A_1 & B_1
  \end{array}
\right) \quad \text{and} \quad {C}_2=\left(
  \begin{array}{c|c}
    Q'_2& R'_2 \\[2pt]
    \hline
    \rule[0.4ex]{0pt}{8pt}
    S'_2 & T'_2  \\[2pt]
    \hline
    \rule[0.4ex]{0pt}{8pt}
    A_2 & B_2
  \end{array}
\right),
\eeqs
are equivalent if (a)~$(A_1|B_1)$ and $(A_2|B_2)$ generate the same stabilizer group $\mathcal{G}$; and (b)~{\tiny
$\left(
  \begin{array}{c|c}
    Q'_1 & R'_1 \\[2pt]
    \hline
    \rule[0.4ex]{0pt}{5pt}
    S'_1 & T'_2  \\
  \end{array}
\right)$} differs from {\tiny
$\left(
  \begin{array}{c|c}
    Q'_2 & R'_2 \\[2pt]
    \hline
    \rule[0.4ex]{0pt}{5pt}
    S'_2 & T'_2  \\
  \end{array}
\right)$} by multiplication of elements in $\mathcal{G}$. Thus, there is a one-to-one correspondence between $\textsf{C}_{\mathcal{G}}/R$ and ${\rm Sp}(2k, \mathbb{Z}_2)$.

Therefore, $\textsf{C}_{\mathcal{G}}/R$ captures the behavior of stabilizer circuits on $\mathcal{S}(\mathcal{G})$. The circuit representation of Eq.~(\ref{eq:generalized_circuit matrix}) is called the \emph{generalized stabilizer form} (GSF) of  a stabilizer subspace throughout the paper.
It will be used as the starting point of the discussion in the rest of the paper.

\section{Measurements of Pauli Operators}\label{sec:measurement}
\subsection{Measurement of an arbitrary single Pauli operator}
We consider the measurement of an arbitrary Hermitian Pauli operator $\pm i^{\tau_{\bf ab}}X^{\bf a}Z^{\bf b}$ on $N$ qubits,
where $\bf a$, ${\bf b}\in \mathbb{Z}_2^N$.

%
\begin{figure}[!htp]
\centering\includegraphics[width=80mm]{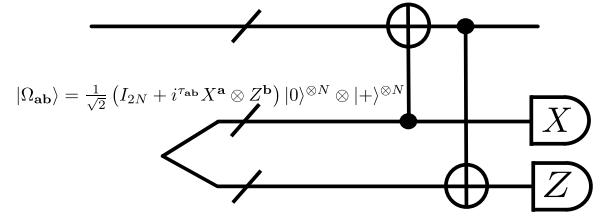}
\caption{\label{fig:measurement} The measurement circuit for $i^{\tau_{\bf ab}}X^{\bf a}Z^{\bf b}$, where $|\Omega_{\bf ab}\>$ is an ancilla state of two blocks of $N$ qubits. After two steps of transversal CNOT gates, the first and second blocks of ancillas are bitwise measured in the $X$ and $Z$ bases, respectively.
}
\end{figure}
The measurement of $i^{\tau_{\bf ab}} X^{\bf a}Z^{\bf b}$ can be realized by the circuit in Fig.~\ref{fig:measurement} with two blocks of ancilla qubits, each containing $N$ qubits. The $2N$-qubit ancilla is prepared in the special state:
\beq
|\Omega_{\bf ab}\> =\frac{1}{\sqrt{2}}\left(I_{2N}+i^{\tau_{\bf ab}}X^{\bf a}\otimes  Z^{\bf b}\right)|0\rangle^{\otimes N}\otimes |+\rangle^{\otimes N}.
\eeq
It is easy to see that it is a stabilizer state and thus, it can be prepared by a Clifford circuit.

Now we prove the functionality of the circuit in Fig.~\ref{fig:measurement}.
We start with the joint state $|\psi\>|\Omega_{\bf ab}\>$. After two transversal CNOTs, the state becomes
\beq
\frac{1}{\sqrt{2}}\left(|\psi\> |0\>^{\otimes N}|+\>^{\otimes N} + i^{\tau_{\bf ab}} X^{\bf a}Z^{\bf b} |\psi\>  X^{\bf a} |0\>^{\otimes N} Z^{\bf b}|+\>^{\otimes N}\right).
\eeq
Let the measurement outcome of the $j$th qubit in the first and second blocks be $v^{x}_j$ and $v^{z}_j\in \{0,1\}$, respectively. Then the joint output state is:
\begin{widetext}
\beq
\begin{split}
&\frac{1}{\sqrt{2}}|\psi\>\bigotimes_{j=1}^N\left(\frac{I+(-1)^{v_j^x}X}{2}|0\>\right)\bigotimes_{j=1}^{N}\left(\frac{I+(-1)^{v_j^z}Z}{2}|+\>\right)+\frac{1}{\sqrt{2}}i^{\tau_{\bf ab}}X^{\bf a}Z^{\bf b}|\psi\>\bigotimes_{j=1}^{N}\left(\frac{I+(-1)^{v_j^x}X}{2}X^{a_j}|0\>\right)\bigotimes_{j=1}^{N}\left(\frac{I+(-1)^{v_j^z}Z}{2}Z^{b_j}|+\>\right)\\
=&\frac{1}{\sqrt{2}}\left(I+\prod_{l\in \text{supp}({\bf a})}(-1)^{v_l^x}\prod_{l\in \text{supp}({\bf b})}(-1)^{v_l^z}i^{\tau_{\bf ab}}X^aZ^b\right)|\psi\>\bigotimes_{j=1}^{N}\left(\frac{I+(-1)^{v_j^x}X}{2}|0\>\right)\bigotimes_{j=1}^{N}\left(\frac{I+(-1)^{v_j^z}Z}{2}|+\>\right),\\
\end{split}
\eeq
\end{widetext}
which is the state after the measurement of $i^{\tau_{\bf ab}}X^{\bf a}Z^{\bf b}$ on $|\psi\>$ with measurement outcome $\prod_{l\in \text{supp}({\bf a})}(-1)^{v_l^x}\prod_{l\in \text{supp}({\bf b})}(-1)^{v_l^z}$.
Thus the circuit works as we claimed.
This Pauli measurement is especially useful when one wants to measure several Pauli operators simultaneously, as we will see in the next subsection.

\subsection{Simultaneous measurement of multiple Pauli operators}\label{sec:multi_measurement}
One may wish to measure several Pauli operators simultaneously.
For a set of non-commuting operators this is not possible, since measuring these operator in different time orders may lead to different final states even with the same measurement outcomes.
However, if the set of Pauli operators commutes, this can be easily done by the circuit in Fig.~\ref{fig:measurement}. In this paper, we restrict ourselves to a commuting set of $d \leq N$ Pauli operators. Suppose the set of commuting Pauli operators to be measured is
\beqs
\{X^{{\bf e}_1}Z^{{\bf f}_1},\dots, X^{{\bf e}_d}Z^{{\bf f}_d} \}.
\eeqs
It is easy to see that one can measure these operators simultaneously by replacing $|\Omega_{\bf ab}\>$ in Fig.~\ref{fig:measurement} with the following stabilizer state:
\beq\label{eq:state_multi_measurement}
|\Omega_{{\bf EF}}\>=\frac{1}{\sqrt{2^d}}\prod_{j=1}^{d}
\left(I_{2N}+i^{\tau_{{\bf e}_j{\bf f}_j}}X^{{\bf e}_j}\otimes Z^{{\bf f}_j}\right)|0\>^{\otimes N}\otimes |+\>^{\otimes N}.
\eeq
$|\Omega_{{\bf EF}}\>$ is also a stabilizer state and can be prepared by a Clifford circuit.

It is also useful to check how GSF changes after a simultaneous measurement of multiple Pauli operators. Here, we consider the special case when $d=N-k$, the number of independent stabilizer generators of $\mathcal{G}$. Then one has the following statement from the theory of the stabilizer formalism:
\begin{mylemma}\label{lemma:multi_measurement}
Consider a circuit with GSF of the form Eq.~(\ref{eq:generalized_circuit matrix}) and a set of $N-k$ (independent) commuting Pauli operators:
\beqs
(E|F)=\left(
        \begin{array}{c|c}
         {\bf e}_1 & {\bf f}_1\\
         \vdots & \vdots\\
         {\bf e}_{N-k} & {\bf f}_{N-k} \\
        \end{array}
      \right).
\eeqs
If the following conditions are satisfied:
\begin{enumerate}
  \item $(E|F)J_N(E|F)^t={\bf 0}_{N-k}$;
  \item $(A|B)J_N (E|F)^t=I_{N-k}$;
  \item $(Q'|R')J_N(E|F)^t={\bf 0}$;
  \item $(S'|T')J_N(E|F)^t={\bf 0}$;
\end{enumerate}
then the GSF of the circuit after the simultaneous measurements of $(E|F)$ becomes
\beqs
\left(
  \begin{array}{c|c}
    Q' & R' \\
    \hline
    \rule[0.4ex]{0pt}{8pt}
    S' & T'  \\
    \hline
    \rule[0.4ex]{0pt}{8pt}
    E & F
  \end{array}
\right).
\eeqs
\end{mylemma}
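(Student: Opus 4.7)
The plan is to reduce the simultaneous measurement of the $N-k$ commuting operators $(E|F)$ to a sequence of $N-k$ single-operator measurements, performed in any order since condition~1 guarantees pairwise commutativity. Each such step can then be analyzed by the standard stabilizer-formalism update rule, and the final GSF is read off after all $N-k$ steps. The measurement itself is realized by the circuit in Fig.~\ref{fig:measurement} with the ancilla of Eq.~(\ref{eq:state_multi_measurement}), whose correctness was already established in the preceding subsection, so I do not need to revisit that part.

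For a single step, consider the first row $E_1=X^{{\bf e}_1}Z^{{\bf f}_1}$. Condition~2, $(A|B)J_N(E|F)^t=I_{N-k}$, says that $E_1$ anticommutes with exactly one stabilizer generator, namely the one in row~1 of $(A|B)$, and commutes with the remaining rows $A_2,\ldots,A_{N-k}$ of $(A|B)$. Conditions~3 and~4 further say that $E_1$ commutes with every row of $(Q'|R')$ and $(S'|T')$. The standard update rule then applies in its cleanest form: replace $A_1$ in the stabilizer by $\pm E_1$, with the sign fixed by the measurement outcome and absorbed into the overall phase we neglect, while every other row of the GSF stays the same. Crucially, no ``cleaning'' step (multiplying anticommuting rows by $A_1$ to restore commutation) is ever required, precisely because only $A_1$ anticommuted with $E_1$.

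I would then iterate. After $j$ measurements, the stabilizer rows are $E_1,\ldots,E_j,A_{j+1},\ldots,A_{N-k}$ and the logical rows are unchanged. To apply the single-measurement rule to $E_{j+1}$ one checks: commutativity with $E_1,\ldots,E_j$ from condition~1, anticommutativity with $A_{j+1}$ alone among the remaining original stabilizer rows from condition~2, and commutativity with the logical rows from conditions~3 and~4. The rule therefore applies cleanly at each step, and after $N-k$ iterations the stabilizer part of the GSF is exactly $(E|F)$ while $(Q'|R')$ and $(S'|T')$ are untouched. The bookkeeping hurdle, and the only step worth watching, is precisely this check that the intermediate configurations continue to satisfy the hypotheses for each subsequent measurement; the four conditions in the statement are tailored to make this automatic. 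The output is a legitimate GSF because $(Q'|R')$ and $(S'|T')$ remain symplectic partners of each other (they were never modified) and, by conditions~3--4, they are orthogonal with respect to $J_N$ to the new stabilizer rows $(E|F)$.
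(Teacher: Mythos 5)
Your proposal is correct: the conditions do guarantee that each $E_j$ anticommutes with exactly one remaining original stabilizer generator and commutes with everything else, so the standard single-measurement update rule applies cleanly at every step and yields the stated GSF with $(Q'|R')$ and $(S'|T')$ untouched. The paper itself gives no explicit proof---it simply asserts the lemma as a consequence of the stabilizer formalism---and your sequential-measurement argument is precisely the standard justification it has in mind.
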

The first two conditions state that $(E|F)$ is a symplectic partner of stabilizer generators $(A|B)$, while the third and forth imply that $(E|F)$ is orthogonal to  the logical operators. Note that the measurement outcomes are encoded in the overall phases and hence are not explicitly shown in this discussion.


\section{Clifford circuits via  a constant number of measurement steps}\label{sec:decomposition}
In this section, we consider Clifford circuits consisting of $n$ qubits. We provide a constructive proof to show that by introducing $n$ extra auxiliary qubits, an arbitrary Clifford circuit can be implemented via a constant number of Pauli operator measurements, up to a permutation of qubits.

For clarity, we label the original $n$ data qubits as $\text{Q}_1,\dots, \text{Q}_n$, and the auxiliary qubits as $\text{A}_1, \dots, \text{A}_n$.
Now we have a total of $N=2n$ qubits in the order  $\{\text{A}_1,\dots, \text{A}_n, \text{Q}_1,\dots, \text{Q}_n\}$.
Suppose that we want to implement a Clifford circuit
  {\tiny $\left(
  \begin{array}{c|c}
    C_1 & C_2 \\[2pt]
    \hline
    \rule[0.4ex]{0pt}{5pt}
    C_3 & C_4  \\
  \end{array}
\right)$}  on $\text{Q}_1,\dots, \text{Q}_n$.
As in Sec.~\ref{sec:stabilizer_circuit},
the GSF with $n$ stabilizer generators (corresponding to the auxiliary qubits) can be written in the form of Eq.~(\ref{eq:generalized_circuit matrix}).
 If the initial state of the auxiliary qubits is $|+\>^{\otimes n}$ or $|0\>^{\otimes n}$, then we start with the GSF of the idle circuit:
\beq\label{eq:identity}
\mathcal{I}=\left(
  \begin{array}{cc|cc}
    {\bf 0}_n  & I_n & {\bf 0}_n  & {\bf 0}_n \\
    \hline
    \rule[0.4ex]{0pt}{8pt}
    {\bf 0}_n  & {\bf 0}_n & {\bf 0}_n & I_n \\
    \hline
    \rule[0.4ex]{0pt}{8pt}
    I_n & {\bf 0}_n  & {\bf 0}_n & {\bf 0}_n \\
  \end{array}
\right) \quad \text{or} \quad
\left(
  \begin{array}{cc|cc}
    {\bf 0}_n  & I_n & {\bf 0}_n  & {\bf 0}_n \\
    \hline
    \rule[0.4ex]{0pt}{8pt}
    {\bf 0}_n  & {\bf 0}_n & {\bf 0}_n & I_n \\
    \hline
    \rule[0.4ex]{0pt}{8pt}
    {\bf 0}_n & {\bf 0}_n  & I_n & {\bf 0}_n \\
  \end{array}
\right),
\eeq
and end up with
\beq\label{eq:generalzied_circuit_2n}
\mathcal{C}=\left(
  \begin{array}{cc|cc}
    {\bf 0}_n  & C_1 & {\bf 0}_n  & C_2 \\
    \hline
    \rule[0.4ex]{0pt}{8pt}
    {\bf 0}_n  & C_3 & {\bf 0}_n & C_4 \\
    \hline
    \rule[0.4ex]{0pt}{8pt}
    I_n & {\bf 0}_n & {\bf 0}_n & {\bf 0}_n \\
  \end{array}
\right) \quad \text{or} \quad \left(
  \begin{array}{cc|cc}
    {\bf 0}_n  & C_1 & {\bf 0}_n  & C_2 \\
    \hline
    \rule[0.4ex]{0pt}{8pt}
    {\bf 0}_n  & C_3 & {\bf 0}_n & C_4 \\
    \hline
    \rule[0.4ex]{0pt}{8pt}
     {\bf 0}_n & {\bf 0}_n & I_n & {\bf 0}_n \\
  \end{array}
\right) .
\eeq
The goal is to find a sequence of Pauli measurements that transforms the initial circuits in Eq.~(\ref{eq:identity}) into the circuits in Eq.~(\ref{eq:generalzied_circuit_2n}).
Equivalently, one can start with Eq.~(\ref{eq:generalzied_circuit_2n}) and reduce the matrix to the initial circuits through Pauli measurements.


\subsection{9-stage Clifford circuit decomposition}
To find a sequence of Pauli measurements that implement a Clifford circuit,
the first step is to decompose the Clifford circuit into simple stages, each of which only contains a single type of Clifford gates. It is known that any Clifford circuit has an equivalent circuit that contains an 11-stage computation as -H-C-P-C-P-C-H-P-C-P-C-~\cite{aaronson2004improved}. Recently it was shown that a further reduction to a 9-stage computation -C-P-C-P-H-P-C-P-C- is possible~\cite{maslov2017Bruhat}.
We will consider the 9-stage decomposition in the following discussion. More specifically, one has the following  result:
\begin{mytheorem}[Bruhat decomposition~\cite{maslov2017Bruhat}]\label{thm:bruhat}
Any symplectic matrix $M$ of dimension $2n\times 2n$ can be decomposed as
\beq\label{eq:used_bruhat}
\begin{split}
M=&M^{(1)}_{C} M^{(1)}_{P}M^{(2)}_{C}M^{(2)}_{P} M^{(1)}_{H} \cdot\\
&M^{(3)}_{P} \left(\pi M^{(3)}_{C}\pi^{-1}\right)M^{(4)}_{P} \left(\pi M^{(4)}_{C}\pi^{-1}\right)\pi.
\end{split}
\eeq
Here, $M^{(j)}_{C} $ are -C- stage matrices containing only CNOT gates $\text{C}(q,r)$ such that $q < r$; $M^{(j)}_{P} $ and $M^{(j)}_{H} $ represent matrices of -P- and -H- stages; $\pi$ is a permutation matrix.
\end{mytheorem}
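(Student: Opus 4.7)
The plan is to interpret the claim as a Bruhat decomposition of the binary symplectic group $\mathrm{Sp}(2n,\mathbb{Z}_2)$ and to identify each canonical piece with the appropriate Clifford stage. First I would verify, using the column-update rules from Section~\ref{sec:stabilizer_circuit}, that a -C- stage containing only $\text{C}(q,r)$ with $q<r$ acts as $\mathrm{diag}(L, L^{-t})$ for an invertible upper-triangular $L$; that a -P- stage acts as $\left(\begin{smallmatrix} I & D \\ 0 & I \end{smallmatrix}\right)$ with $D$ diagonal; that an -H- stage swaps the two column blocks on a subset of qubits; and that a permutation $\pi$ acts as $\mathrm{diag}(P,P)$. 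Together the -C- (with $q<r$) and -P- stages generate the Borel subgroup $B \subset \mathrm{Sp}(2n,\mathbb{Z}_2)$ of block-upper-triangular symplectic matrices, while the -H- stages and $S_n$-permutations generate the Weyl group $W_n = \mathbb{Z}_2^n \rtimes S_n$ of signed permutations.

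Next I would invoke the Bruhat decomposition $\mathrm{Sp}(2n,\mathbb{Z}_2) = \bigsqcup_{w\in W_n} B\, w\, B$ to write $M = b_1 w b_2$. Factoring $w = M_H^{(1)} \pi$ into its Hadamard (sign) and permutation parts, then commuting $\pi$ through $b_2$, yields $M = b_1 \, M_H^{(1)} \, (\pi b_2 \pi^{-1}) \, \pi$. This already has the required shape: conjugation by $\pi$ only relabels qubits inside a Borel element, which is precisely why the later -C- factors appear as $\pi M_C^{(j)} \pi^{-1}$, and the trailing $\pi$ sits outside as a free wire relabelling.

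It remains to decompose each Borel element into a four-stage -C-P-C-P- product. Writing $b = \left(\begin{smallmatrix} A & AS \\ 0 & A^{-t} \end{smallmatrix}\right)$ with $A$ upper-triangular invertible and $S$ symmetric, a direct multiplication of $\mathrm{diag}(L_1, L_1^{-t}) \left(\begin{smallmatrix} I & D_1 \\ 0 & I \end{smallmatrix}\right) \mathrm{diag}(L_2, L_2^{-t}) \left(\begin{smallmatrix} I & D_2 \\ 0 & I \end{smallmatrix}\right)$ reduces the task to finding upper-triangular $L_1, L_2$ with $L_1 L_2 = A$ and diagonal $D_1, D_2$ satisfying
\beqs
L_2\,(S + D_2)\, L_2^{t} \;=\; D_1 .
\eeqs

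The hard part will be solving this equation over $\mathbb{Z}_2$: a symmetric matrix is \emph{not} in general congruent to a diagonal matrix via a triangular change of basis, because the alternating (zero-diagonal) part is obstructed by the identity $(LDL^t)_{ii} = \sum_k L_{ik} D_{kk}$ in characteristic~$2$. The construction must therefore use the freedom in $D_2$ first to repair the diagonal of $S$, and then exploit the upper-triangular $L_2$ inductively (processing qubits from the last index to the first) to sweep the remaining symmetric off-diagonal entries into the diagonal matrix $D_1$. Once this congruence-to-diagonal lemma is established, the total count $4 + 1 + 4 = 9$ stages matches the statement exactly, with the final $\pi$ absorbed as a free wire relabelling.
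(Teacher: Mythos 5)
The paper itself gives no proof of Theorem~\ref{thm:bruhat} --- it is imported from Ref.~\cite{maslov2017Bruhat} --- and your proposal essentially reconstructs the route of that reference: the Bruhat decomposition of $\mathrm{Sp}(2n,\mathbb{Z}_2)$ with respect to the Borel subgroup generated by the restricted -C- and -P- stages, with Weyl group the signed permutations $\mathbb{Z}_2^n\rtimes S_n$, each representative factored as (Hadamards on a subset) times (qubit permutation). The outline is sound, and your key congruence lemma is true, but it should be organized as a single backward induction rather than ``repair the diagonal with $D_2$ first, then sweep'': which diagonal entries must serve as pivots only becomes known during the sweep. Concretely, writing the requirement as $S+D_2=UD_1U^t$ with $U=L_2^{-1}$ unit upper triangular, process columns $j=n,\dots,1$; the quantities $T_{ij}=S_{ij}+\sum_{k>j}U_{ik}U_{jk}(D_1)_{kk}$ for $i<j$ are already determined, so set $(D_1)_{jj}=1$ and $U_{ij}=T_{ij}$ if some $T_{ij}\neq 0$, and $(D_1)_{jj}=U_{ij}=0$ otherwise; then $UD_1U^t$ agrees with $S$ off the diagonal, and $D_2:=UD_1U^t+S$ is the required diagonal, consumed only at the end.

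Two further patches are needed to land exactly on Eq.~(\ref{eq:used_bruhat}). First, note that $\pi b_2\pi^{-1}$ is generally not in the Borel, so you must factor $b_2$ itself and conjugate the factors; and your lemma produces the order -C-P-C-P-, which is what $b_1$ needs, whereas the stated right-hand part requires $b_2$ in the order -P-C-P-C- so that conjugation gives $M_P^{(3)}\left(\pi M_C^{(3)}\pi^{-1}\right)M_P^{(4)}\left(\pi M_C^{(4)}\pi^{-1}\right)$. This follows by applying the lemma to $b_2^{-1}\in B$ and inverting, since -P- stages are involutions in the binary symplectic picture and the inverse of a restricted -C- stage is again one; as written, ``this already has the required shape'' skips the point, and without it you obtain the inequivalent ordering -C-P-C-P-H-C-P-C-P-$\pi$. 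Second, early on you call $B$ the ``block-upper-triangular symplectic matrices,'' which is the Siegel parabolic; both the Bruhat cells and the four-stage factorization need the genuine Borel, with the diagonal block itself upper triangular, as you in fact assume later when you write $b$ with $A$ upper triangular. With these repairs the count -C-P-C-P-H-P-C-P-C- plus the trailing permutation is exactly the claimed nine stages.
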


Compared to the 11-stage decomposition, the 9-stage decomposition has fewer stages, and it only requires CNOTs such that the index of the control qubit is less than the index of the target qubit.
Recall that a symplectic matrix can be expressed as in Eq.~(\ref{eq:symplectic_matrix}).
The corresponding  {symplectic} matrix of a -C- stage with such CNOTs can be written as
\beq
M_C=\left(
  \begin{array}{c|c}
    U      &  {\bf 0}_n\\[2pt]
    \hline
    \rule[1.0ex]{0pt}{8pt}
    {\bf 0}_n &  \left(U^{t}\right)^{-1}\\
  \end{array}
\right),
\eeq
where $U$ is an invertible $n\times n$ upper triangular matrix. As an example, a circuit of two consecutive CNOT gates is shown in Fig.~\ref{fig:cnots} and
its symplectic matrix   is
\beqs
\left(
  \begin{array}{ccc|ccc}
    1 & 1 & 0 & 0 & 0 & 0 \\
    0 & 1 & 1 & 0 & 0 & 0 \\
    0 & 0 & 1 & 0 & 0 & 0 \\
    \hline
    0 & 0 & 0 & 1 & 0 & 0 \\
    0 & 0 & 0 & 1 & 1 & 0 \\
    0 & 0 & 0 & 1 & 1 & 1 \\
  \end{array}
\right).
\eeqs
\begin{figure}[!htp]
\centering\includegraphics[width=50mm]{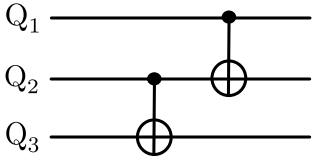}
\caption{\label{fig:cnots} Two consecutive CNOTs gates.
}
\end{figure}
There are approximately $n^2/2$ different such CNOT circuits~\cite{maslov2017Bruhat}. As we will see soon, this property is particularly useful when one tries to implement -C- stages via Pauli measurements.

For a -P- stage, since $\text{P}^4=I_2$, effectively   there are three single-qubit gates: P, P$^2=Z$ and $\text{P}^3=\text{P}^\dag=\text{P}Z$. Note that we will postpone all the $Z$ gates to the final stage, and thus the -P- layer consists of at most $n$ individual Phase gates. Hence, the symplectic matrix of a -P- stage is in general of the form:
\beq
M_P=\left(
  \begin{array}{c|c}
    I_n      &  \Lambda \\
    \hline
    \rule[0.4ex]{0pt}{8pt}
    {\bf 0}_n &  I_n\\
  \end{array}
\right),
\eeq
where $\Lambda$ is a diagonal matrix.

Similar to the -P- stage,  since $\text{H}^2=I_2$, an -H- stage  contains at most $n$ individual H gates. The symplectic matrix of an -H- stage on an arbitrary set of $m$ qubits can be written as $M_{H} = \pi' M_{H_m} \pi'^{-1}$, where
\beq
M_{H_m}=  \left(
        \begin{array}{cccc}
          {\bf 0} & {\bf 0} & I_m & {\bf 0} \\
          {\bf 0} & I_{n-m} & {\bf 0} & {\bf 0} \\
          I_m & {\bf 0} & {\bf 0} & {\bf 0} \\
          {\bf 0} & {\bf 0} & {\bf 0} & I_{n-m} \\
        \end{array}
      \right),
\eeq
represents the Hadamard gates acting on $\text{Q}_1,\dots, \text{Q}_m$ and $\pi'$ is some permutation matrix.


\subsection{The -P- stage}
As discussed above, a -P- stage only contains at most a single Phase gate acting on each qubit, and thus we discuss the effect of a single-qubit Phase gate on data qubit $\text{Q}_j$.
For $m$ Phase gates acting on a subset of $m$ qubits, such a procedure can be done simultaneously by Pauli measurements.

Consider a pair of qubits $\{\text{A}_j,\text{Q}_j\}$ with $\text{A}_j$ in $|0\>$ state. The GSF of the idle circuit is:
\beqs
\left(
  \begin{array}{cc|cc}
    0 & 1 & 0 & 0 \\
        \hline
    \rule[0.4ex]{0pt}{8pt}
    0 & 0 & 0 & 1 \\
        \hline
    \rule[0.4ex]{0pt}{8pt}
    0 & 0 & 1 & 0 \\
  \end{array}
\right),
\eeqs
where the first and second columns correspond to $\text{A}_j$ and $\text{Q}_j$, respectively;
the first two rows are the logical operators corresponding to $\text{Q}_j$ and the third row represents the stabilizer generator corresponding to $\text{A}_j$.

First, add the stabilizer row to both logical operator rows (which will give an equivalent GSF of the circuit), and measure  operator $(1 \ \ 1 \ | \ 0 \ \ 1 )$ or $X_{\text{A}_j}Y_{\text{Q}_j}$. One obtains
\beqs
\left(
  \begin{array}{cc|cc}
    0 & 1 & 1 & 0 \\
        \hline
    \rule[0.4ex]{0pt}{8pt}
    0 & 0 & 1 & 1 \\
        \hline
    \rule[0.4ex]{0pt}{8pt}
    1 & 1 & 0 & 1 \\
  \end{array}
\right),
\eeqs
which is equivalent to
\beqs
\left(
  \begin{array}{cc|cc}
    1 & 0 & 1 & 1 \\
        \hline
    \rule[0.4ex]{0pt}{8pt}
    0 & 0 & 1 & 1 \\
        \hline
    \rule[0.4ex]{0pt}{8pt}
    1 & 1 & 0 & 1 \\
  \end{array}
\right)
\eeqs
by adding the stabilizer row to the first logical operator row. Next do the Pauli measurement  $(0 \ \ 0 \ | \ 0 \ \ 1 )$ or $Z_{\text{Q}_j}$. One gets
\beqs
\left(
  \begin{array}{cc|cc}
    1 & 0 & 1 & 0 \\
        \hline
    \rule[0.4ex]{0pt}{8pt}
    0 & 0 & 1 & 0 \\
        \hline
    \rule[0.4ex]{0pt}{8pt}
    0 & 0 & 0 & 1 \\
  \end{array}
\right).
\eeqs
After swapping $\text{A}_j$ and $\text{Q}_j$, the overall effect is a Phase gate on $\text{Q}_j$ up to a Pauli correction depending on the measurement outcomes. The swap does not need to be done physically. Instead, one can just keep a record of it in software.

For the case of $m$ Phase gates, since $\{X_{\text{A}_j}Y_{\text{Q}_j}| 1\leq j\leq n\}$ and $\{Z_{\text{Q}_j}\ | \ 1\leq j\leq n \}$ are commuting operator sets and the measurements of $\{Z_{\text{Q}_j}\}$ can be directly applied, it requires only two steps of Pauli measurement and one $4n$-qubit ancilla state for a -P- stage. If Phase gates are applied to
 a set $\mathscr{M}$ of qubits, then the required ancilla state is
{\small
\beq\label{eq:phase_ancilla}
|\Omega_{\text{P}_\mathscr{M}}\>=\frac{1}{\sqrt{2^{|\mathscr{M}|}}}\prod_{j\in \mathscr{M}}
\left(I_{2N}+i\left(X_jX_{j+n}\right)\otimes Z_{j+n}\right)|0\>^{\otimes 2n}\otimes |+\>^{\otimes 2n}.
\eeq
}This state can be obtained by projecting $|0\>^{\otimes 2n}\otimes |+\>^{\otimes 2n}$ to the joint $+1$ eigenspace of $\{X_jX_{j+n}\otimes Z_{j+n}\ |\ j\in \mathscr{M}\}$. Thus, it is stabilized by $\{Z_j Z_{j+n}\otimes I_{2n}, Z_{j+n}\otimes X_{j+n},X_jX_{j+n}\otimes Z_{j+n}\ |\ j\in \mathscr{M}\}$, which is a CSS state up to Hadamard gates on qubits $\{j+n, \ j\in \mathscr{M}\}$ in the second ancilla block.

\subsection{The -H-stage}
Like the -P- stage, we  consider only a single H on a data qubit. For a pair of qubits $\{\text{A}_j,\text{Q}_j\}$ with $\text{A}_j$ in $|0\>$ state, the idle circuit is
\beqs
\left(
  \begin{array}{cc|cc}
    0 & 1 & 0 & 0 \\
        \hline
    \rule[0.4ex]{0pt}{8pt}
    0 & 0 & 0 & 1 \\
        \hline
    \rule[0.4ex]{0pt}{8pt}
    0 & 0 & 1 & 0 \\
  \end{array}
\right).
\eeqs
Adding the stabilizer row to the first row of the logical operator and then measuring $(1\ \ 0 \ | \ 0 \ \ 1)$ or $X_{A_j}Z_{Q_j}$, one obtains
\beqs
\left(
  \begin{array}{cc|cc}
    0 & 1 & 1 & 0 \\
        \hline
    \rule[0.4ex]{0pt}{8pt}
    0 & 0 & 0 & 1 \\
        \hline
    \rule[0.4ex]{0pt}{8pt}
    1 & 0 & 0 & 1 \\
  \end{array}
\right).
\eeqs
Adding the stabilizer row to the second row of the logical operator and measuring $(0\ \ 1 \ | \ 0 \ \ 0)$ or $X_{\text{Q}_j}$, one gets
\beqs
\left(
  \begin{array}{cc|cc}
    0 & 1 & 1 & 0 \\
        \hline
    \rule[0.4ex]{0pt}{8pt}
    1 & 0 & 0 & 0 \\
        \hline
    \rule[0.4ex]{0pt}{8pt}
    0 & 1 & 0 & 0 \\
  \end{array}
\right),
\eeqs
which is equivalent to
\beqs
\left(
  \begin{array}{cc|cc}
    0 & 0 & 1 & 0 \\
        \hline
    \rule[0.4ex]{0pt}{8pt}
    1 & 0 & 0 & 0 \\
        \hline
    \rule[0.4ex]{0pt}{8pt}
    0 & 1 & 0 & 0 \\
  \end{array}
\right).
\eeqs
 After swapping $\text{A}_j$ and $\text{Q}_j$, the overall effect is a Hadamard gate on $\text{Q}_j$ with $\text{A}_j$ in $|+\>$ up to a Pauli correction depending on the measurement outcomes. (Again, this swap just needs to be recorded in software.)

 Since $\{X_{A_j}Z_{Q_j}\ | \ 1\leq j \leq n\}$ and $\{X_{Q_j}\ | \ 1\leq j \leq n\}$ are both commuting sets, we need just two steps of Pauli measurements and one $4n$-qubit ancilla state for an -H- stage. If Hadamard gates are applied to a set $\mathscr{M}$ of qubits, the required ancilla state  is
\beq\label{eq:had_ancilla}
|\Omega_{\text{H}_\mathscr{M}}\>=\frac{1}{\sqrt{2^{|\mathscr{M}|}}}\prod_{j\in \mathscr{M}}
\left(I_{2N}+X_j\otimes Z_{j+n}\right)|0\>^{\otimes 2n}\otimes |+\>^{\otimes 2n}.
\eeq
It is easy to recognize that it is the same state one obtains after projecting $|0\>^{\otimes 2n}\otimes |+\>^{\otimes 2n}$ to the joint $+1$ eigenspace of $\{X_j\otimes Z_{j+n}\ |\ j\in \mathscr{M}\}$. Thus, it is a CSS state (up to Hadamard gates) stabilized by $\{X_j\otimes Z_{j+n}, Z_j\otimes X_{j+n}\ |\ j\in \mathscr{M}\}$.

%

\subsection{The -C- stage}
The set of measurement operators for a -C- stage is more complicated to find. We first introduce the following lemma that will be used later.
\begin{mylemma}\label{lemma:stabilizer_transform}
Let $L_1$ be an $n\times n$ lower triangular matrix with the diagonal elements being zeros. 
Suppose
\beqs
L=(I_n  \ L_1).
\eeqs
Then there exists  a full-rank matrix $L'=(L_2  \ L_3)$, where $L_2$ and $L_3$ are two $n\times n$ lower triangular matrices,
such that the rows of $L'$ are linear combinations of rows of $L$ and
\beq\label{eq:orthogonal}
 L'
 \left(
   \begin{array}{c}
     I_n \\
     I_n \\
   \end{array}
 \right) = L_2+L_3= I_n.
 \eeq
\end{mylemma}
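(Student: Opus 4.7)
The plan is to exploit the fact that the rows of $L'$ must be linear combinations of the rows of $L$, so I would write $L' = M L$ for some $n\times n$ matrix $M$ over $\mathbb{Z}_2$. Partitioning gives $L' = (M \ \ M L_1)$, so the three requirements translate directly into $L_2 = M$, $L_3 = M L_1$, together with $L_2+L_3 = M(I_n+L_1) = I_n$. This forces the unique choice
\beq
M = (I_n + L_1)^{-1},
\eeq
and the whole problem reduces to verifying that this inverse exists and that both $M$ and $ML_1$ are lower triangular.

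The key observation I would use is that $I_n + L_1$ is \emph{unit} lower triangular, because $L_1$ is lower triangular with zero diagonal so the diagonal of $I_n+L_1$ consists of $1$'s. Hence $I_n+L_1$ is invertible over $\mathbb{Z}_2$, and its inverse is again unit lower triangular (a standard fact, provable by forward substitution column-by-column). This immediately yields that $L_2 = M$ is lower triangular, and invertibility of $M$ already guarantees that $L'$ has full rank $n$.

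For $L_3$, instead of multiplying out $M L_1$ directly, I would use the algebraic identity (working in characteristic $2$)
\beq
L_3 \;=\; M L_1 \;=\; M(I_n + L_1) + M \;=\; I_n + M.
\eeq
Thus $L_3 = I_n + M$ is a sum of two lower triangular matrices and is therefore lower triangular, and the defining relation $L_2 + L_3 = M + (I_n+M) = I_n$ is automatic. Moreover, because $M$ has $1$'s on its diagonal, $L_3$ has $0$'s on its diagonal, mirroring the structure of $L_1$.

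The proof is essentially algebraic once the right candidate for $M$ is identified, and I do not anticipate a serious obstacle: the only point that needs a brief justification is that the inverse of a unit lower triangular matrix over $\mathbb{Z}_2$ is again unit lower triangular, which follows by a one-line forward-substitution argument. Everything else is bookkeeping and the identity $ML_1 = I_n + M$ that bypasses any explicit computation of the product.
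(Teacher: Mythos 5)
Your proof is correct, but it takes a genuinely different route from the paper's. The paper builds $L'$ row by row: it sets $l'_1=l_1$ and, for $j\ge 2$, defines $l'_j=l_j+\sum_{p\in\mathscr{I}_j} l'_p$, where $\mathscr{I}_j$ collects the indices $p<j$ with $l_jc_p=1$, and then proves $l'_jc_p=\delta_{jp}$ and the lower-triangular shape of $(L_2\ L_3)$ by induction on $j$. That recursion is in effect forward substitution, i.e.\ it implicitly computes the very matrix $M=(I_n+L_1)^{-1}$ that you write down in closed form, but the paper never identifies it as an inverse. Your version is shorter and more structural: writing $L'=ML$, the constraint $L_2+L_3=M(I_n+L_1)=I_n$ forces the unique choice $M=(I_n+L_1)^{-1}$, and everything follows from the standard fact that a unit lower triangular matrix over $\mathbb{Z}_2$ is invertible with a unit lower triangular inverse; your identity $ML_1=I_n+M$ (or simply the fact that a product of lower triangular matrices is lower triangular) settles $L_3$, and invertibility of $L_2=M$ gives full rank. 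What your argument buys is brevity and the extra observation that $L'$ is essentially unique; what the paper's buys is an explicit, algorithmic row-combination procedure that mirrors the stabilizer-row manipulations actually performed in the -C- stage, without appealing to matrix inversion.
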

\begin{proof}
Let $l'_j$ denote the $j$th row vector of $L'$ and $c_p$ be the $p$th column vector of $(I_n \ I_n)^t$.
Equation~(\ref{eq:orthogonal}) is equivalent to
\beq
l'_j c_p = \delta_{jp}, \quad \quad  1\leq j, p\leq n, \label{eq:orthogonal2}
\eeq
where  $\delta$ is the Kroneker delta function.

Let $l_j$ denote the $j$th row vector of $L$. Obviously, $l_1=(1,0,\dots, 0)$, satisfying
$l_1 c_p = \delta_{1p}$. Let $l'_1=l_1$.

It is easy to see that $l_j c_p = 0$ for $p>j$, since $L_1$ is a lower triangular matrix. With all the diagonal elements of $L_1$ being~0, one has
\beq
l_jc_j=1.\label{eq:lc}
\eeq
Define the set $\mathscr{I}_j=\{p\ |\ l_{j}c_p =1, p < j \}$.
For $j=2,\dots,n$, let
\beq
l_{j}'=l_{j} + \sum_{p\in \mathscr{I}_j} l_p'. \label{eq:lp}
\eeq
We also define a matrix $L'^{(j)}$ that contains the rows $l_1',\dots,l_j'$: 
\beqs
L'^{(j)}=\left(
           \begin{array}{c}
             l_1' \\[2pt]
             \vdots \\[2pt]
             l_j' \\
           \end{array}
         \right).
\eeqs
Since $L_1$ is lower triangular, and the summation of  $l_p$  in Eq.~(\ref{eq:lp}) only counts the terms with  $p<j$, $L'^{(j)}$ can be written as
\beqs
L'^{(j)}=\left(L_2^{(j)}\ L_3^{(j)}\right),
\eeqs
where $L_2^{(j)}$ and $L_3^{(j)}$ are also lower triangular matrices. Eventually, we have $L_2=L_2^{(n)}$ and $L_3=L_3^{(n)}$.

It remains to prove Eq.~(\ref{eq:orthogonal2}).
We prove this by induction.
For $j = 2$, if $l_2 c_1 = 1$, one has $l_2'=l_2 + l_1$. Thus $l_2' c_1 = 0$ and $l_2'c_2=1$, since $l_1c_1=1$ and $l_1c_2=0$. Also, $l_2' c_p=0$ for $p>2$ since $L_2'^{(2)}$ and $L_3'^{(2)}$ are lower triangular matrices. So $l'_2 c_p = \delta_{2p}$ holds for $1\leq p \leq n$.

Now assume $l'_{1}c_p=\delta_{1p}$, $\dots,$ $l'_{j}c_p=\delta_{jp}$ holds. 
Then
\beqs
l_{j+1}'c_q=l_{j+1}c_q + \sum_{p\in \mathscr{I}_{j+1} } l_p'c_q.
\eeqs
Consider $q < j+1$ first. If $l_{j+1}c_q=1$, then $q\in \mathscr{I}_{j+1}$ and
\beqs
\sum_{p\in \mathscr{I}_{j+1} } l_p'c_q = \sum_{p\in \mathscr{I}_{j+1}} \delta_{pq}=1.
\eeqs
Then $l_{j+1}'c_q=0$.
If $l_{j+1}c_q=0$, then $q\notin \mathscr{I}_{j+1}$ and
$\sum_{p\in \mathscr{I}_{j+1}} l_p'c_q = 0$. Again, $l_{j+1}'c_q=0$. When $q=j+1$,
$l'_{j+1}c_{j+1} = l_{j+1}c_{j+1} = 1$ by Eq.~(\ref{eq:lc}).
For $q>j+1$, since $L^{(j+1)}_2$ and $L^{(j+1)}_3$ are both lower triangular, $l_{j+1}'c_q = 0$. Thus, $l'_jc_p=\delta_{jp}$ holds for $1\leq j,p\leq n$.

\end{proof}

Now we are ready to show that any -C- stage containing $\text{C}(j,l)$ on $\text{Q}_1,\dots , \text{Q}_n$ with $j<l$ can be implemented by a constant number of Pauli measurements.
Unlike the case of -P- stage or -H- stage, we
start from the GSF of an arbitrary -C- circuit with $\text{A}_1,\dots,\text{A}_n$ in $|+\>^{\otimes n}$ state:
\beq
\left(
  \begin{array}{cc|cc}
    {\bf 0}_n  & U & {\bf 0}_n  & {\bf 0}_n \\
    \hline
    \rule[0.4ex]{0pt}{8pt}
    {\bf 0}_n  & 0 & {\bf 0}_n & (U^t)^{-1} \\
    \hline
    \rule[0.4ex]{0pt}{8pt}
    I_n & {\bf 0}_n & {\bf 0}_n & {\bf 0}_n \\
  \end{array}
\right),
\eeq
and try to reduce it to the idle circuit. Meanwhile, we will provide the reverse operations that will effectively implement the target CNOT circuit.

As mentioned before, $U$ is some invertible upper triangular matrix. The GSF is then equivalent to
\beq\label{eq:first_measure_before}
\left(
  \begin{array}{cc|cc}
    U+I_n  & U & {\bf 0}_n  & {\bf 0}_n \\
    \hline
    \rule[0.4ex]{0pt}{8pt}
    {\bf 0}_n  & {\bf 0}_n & {\bf 0}_n & (U^t)^{-1} \\
    \hline
    \rule[0.4ex]{0pt}{8pt}
    I_n & {\bf 0}_n & {\bf 0}_n & {\bf 0}_n \\
  \end{array}
\right)
\eeq
since all the nonzero row vectors of $\left(U+I_n \ \ {\bf 0}_n \ | \ {\bf 0}_n  \ \ {\bf 0}_n\right)$ can be generated by $\left(I_n \ \ {\bf 0}_n \ | \ {\bf 0}_n  \ \ {\bf 0}_n\right)$
and we then add these vectors to the first row.

Since $U$ is of full rank, the diagonal elements of $U+I_n$ must be all zeros.
Observe that  $\left({\bf 0}_n \ {\bf 0}_n \ | \ {I}_n \ \ (U^t)^{-1}+I_n\right)$ commutes with the logical operators and is a symplectic partner of the stabilizer generators.
This can be checked by verifying that
\beqs
\left( {I}_n \ \ \ \ (U^t)^{-1}+I_n\right) \left(U+I_n \ \ \ \ U\right)^t = {\bf 0}_n,
\eeqs
and
\beqs
\left(I_n \ {\bf 0}_n \ | \ {\bf 0}_n \ \ {\bf 0}_n\right)J_{2n}\left({\bf 0}_n \ {\bf 0}_n \ | \ {I}_n \ \ (U^t)^{-1}+I_n\right)^t=I_{2n}.
\eeqs
According to Lemma~\ref{lemma:multi_measurement}, one can measure $n$ commuting Pauli operators $\left({\bf 0}_n \ {\bf 0}_n \ | \ {I}_n \ \ (U^t)^{-1}+I_n\right)$ simultaneously. The GSF will then be transformed into
\beq\label{eq:first_measure_after}
\left(
  \begin{array}{cc|cc}
    U+I_n  & U & {\bf 0}_n  & {\bf 0}_n \\
    \hline
    \rule[0.4ex]{0pt}{8pt}
    {\bf 0}_n  & {\bf 0}_n & {\bf 0}_n & (U^t)^{-1} \\
    \hline
    \rule[0.4ex]{0pt}{8pt}
    {\bf 0}_n & {\bf 0}_n & {I}_n & (U^t)^{-1}+I_n \\
  \end{array}
\right).
\eeq
(Meanwhile, we can perform the Pauli measurements $\left(I_n \ \ {\bf 0}_n \ | \ {\bf 0}_n \ \ {\bf 0}_n\right)$ to reverse the process (from Eq.~(\ref{eq:first_measure_after}) to Eq.~(\ref{eq:first_measure_before})).)

Now, adding the third row of Eq.~(\ref{eq:first_measure_after}) to the second row, one can obtain an equivalent GSF
\beq
\left(
  \begin{array}{cc|cc}
    U+I_n  & U & {\bf 0}_n  & {\bf 0}_n \\
    \hline
    \rule[0.4ex]{0pt}{8pt}
    {\bf 0}_n  & {\bf 0}_n & I_n & I_n \\
    \hline
    \rule[0.4ex]{0pt}{8pt}
    {\bf 0}_n & {\bf 0}_n & {I}_n & (U^t)^{-1}+I_n \\
  \end{array}
\right).
\eeq
Let $L=\left( {I}_n \ \ \ L_1\right)$, where $L_1=(U^t)^{-1}+I_n$ is a lower triangular matrix with all the diagonal elements being~0. By Lemma~\ref{lemma:stabilizer_transform}, the GSF can be equivalently transformed into
\beq\label{eq:second_measure_before}
\left(
  \begin{array}{cc|cc}
    U+I_n  & U & {\bf 0}_n  & {\bf 0}_n \\
    \hline
    \rule[0.4ex]{0pt}{8pt}
    {\bf 0}_n  & {\bf 0}_n & I_n & I_n \\
    \hline
    \rule[0.4ex]{0pt}{8pt}
    {\bf 0}_n & {\bf 0}_n & L_2 & L_3 \\
  \end{array}
\right),
\eeq
where $(L_2 \ \ L_3) (I_n \ \ I_n)^t=I_n$. By Lemma~\ref{lemma:multi_measurement} again, one can measure a set of $n$ Pauli operators $(I_n \ \ I_n | \ {\bf 0}_n \ \  {\bf 0}_n)$ simultaneously and transform the GSF into
\beq\label{eq:second_measure_after}
\left(
  \begin{array}{cc|cc}
    U+I_n  & U & {\bf 0}_n  & {\bf 0}_n \\
    \hline
    \rule[0.4ex]{0pt}{8pt}
    {\bf 0}_n  & {\bf 0}_n & I_n & I_n \\
    \hline
    \rule[0.4ex]{0pt}{8pt}
    I_n & I_n & {\bf 0}_n & {\bf 0}_n \\
  \end{array}
\right).
\eeq
Meanwhile, measuring $\left({\bf 0}_n \ \ {\bf 0}_n \ | \ L_2 \ \ L_3\right)$  will transfer the GSF of Eq.~(\ref{eq:second_measure_after}) into Eq.~(\ref{eq:second_measure_before}). Note that the measurement of $\left({\bf 0}_n \ \ {\bf 0}_n \ | \ L_2 \ \ L_3\right)$ is equivalent to measuring $\left({\bf 0}_n \ \ {\bf 0}_n \ | \ {I}_n \ \ (U^t)^{-1}+I_n \right)$.

Now, since the stabilizer generators in Eq.~(\ref{eq:second_measure_after})  are of the form $\left(I_n \ \ I_n \ | \ {\bf 0}_n \ \ {\bf 0}_n\right)$, one can add $\left(U+I_n \ \ U+I_n \ | \ {\bf 0}_n  \ \ {\bf 0}_n\right)$ to the first row of Eq.~(\ref{eq:second_measure_after}), which equivalently reduces the GSF to:
\beq\label{eq:third_measure_before}
\left(
  \begin{array}{cc|cc}
    {\bf 0}_n  & I_n & {\bf 0}_n  & {\bf 0}_n \\
    \hline
    \rule[0.4ex]{0pt}{8pt}
    {\bf 0}_n  & {\bf 0}_n & I_n & I_n \\
    \hline
    \rule[0.4ex]{0pt}{8pt}
    I_n & I_n & {\bf 0}_n & {\bf 0}_n \\
  \end{array}
\right).
\eeq

The final step is to eliminate the left-most $I_n$ in the second row of Eq.~(\ref{eq:third_measure_before}). This can be done by measuring  $\left({\bf 0}_n \ \ {\bf 0}_n \ | \ I_n  \ \ {\bf 0}_n\right)$ and adding the third row to the second. This will then transform the GSF into the second matrix in Eq.~(\ref{eq:identity}):
\beq\label{eq:third_measure_after}
\left(
  \begin{array}{cc|cc}
    {\bf 0}_n  & I_n & {\bf 0}_n  & {\bf 0}_n \\
    \hline
    \rule[0.4ex]{0pt}{8pt}
    {\bf 0}_n  & {\bf 0}_n & {\bf 0}_n & I_n \\
    \hline
    \rule[0.4ex]{0pt}{8pt}
    {\bf 0}_n & {\bf 0}_n & I_n & {\bf 0}_n \\
  \end{array}
\right),
\eeq
Meanwhile, one can measure the set of $n$ Pauli operators $\left(I_n \ \ I_n \ | \ {\bf 0}_n \ \ {\bf 0}_n\right)$ to transform Eq.~(\ref{eq:third_measure_after}) to Eq.~(\ref{eq:third_measure_before}).

To reverse the whole procedure above and start from Eq.~(\ref{eq:third_measure_after}), we initially set $\text{A}_1,\dots, \text{A}_n$ to $|0\>^{\otimes n}$ and perform the following three sets of Pauli measurements simultaneously:
\beq\label{eq:three_measurements}
\begin{split}
&1.~\left(I_n \ \ I_n \ | \ {\bf 0}_n \ \ {\bf 0}_n\right),\\
&2.~\left({\bf 0}_n \ \ {\bf 0}_n \ | \ {I}_n \ \ (U^t)^{-1}+I_n \right),\\
&3.~\left(I_n \ \ {\bf 0}_n \ | \ {\bf 0}_n \ \ {\bf 0}_n\right).
\end{split}
\eeq
The non-trivial measurements 1 and 2 need two $2n$-qubit CSS ancilla states (see Eq.~(\ref{eq:state_multi_measurement})), which are
\beq\label{eq:Cnot_ancilla_1}
|\Omega_{\text{C}_1}\>=\frac{1}{\sqrt{2^n}}\prod_{j=1}^n \left(I+X_j X_{j+n}\right)|0\>^{\otimes 2n}
\eeq
and
\beq\label{eq:Cnot_ancilla_2}
|\Omega_{\text{C}_2}\>=\frac{1}{\sqrt{2^n}}\prod_{j=1}^n \left(I+Z^{{\bf u}_j}\right)|+\>^{\otimes 2n},
\eeq
where ${\bf u}_j$ is the $j$th row of   $\left({I}_n \ \ (U^t)^{-1}+I_n \right)$.
The first ancilla is actually an $n$-fold tensor product of Bell states. The second ancilla is the key resource state in our procedure to reduce the depth of -C- stage computation. Both of them are $2n$-qubit CSS states.
The third step is a trivial bitwise measurement on $\text{A}_1,\dots,\text{A}_n$ in the $X$ basis and can be directly done without additional ancillas.
The net effect is the desired \text{-C-} stage computation acting on $\text{Q}_{1}, \dots, \text{Q}_n$, and the auxiliary qubits $\text{A}_{1}, \dots, \text{A}_n$ are reset to $|+\>^{\otimes n}$ (up to $Z$ corrections). One can transfer $\text{A}_{1}, \dots, \text{A}_n$ back into $|0\>^{\otimes n}$ or just keep them and start with $|+\>^{\otimes n}$ for the next stage. The procedure with $\text{A}_{1}, \dots, \text{A}_n$ initially in $|+\>^{\otimes n}$ state for -C- stage is similar. As a conclusion, one has the following theorem:
\begin{mytheorem}\label{thm:cnot_measurement}
For a set of $2n$ qubits $\text{A}_1,\dots, \text{A}_n, \text{Q}_1,\dots, \text{Q}_n$, where $\text{A}_1,\dots, \text{A}_n$ are initially in state $|0\>^{\otimes n}$ or $|+\>^{\otimes n}$, any -C- stage circuit containing only $\text{C}(j,l)$ with $j<l$ on $\text{Q}_1,\dots, \text{Q}_n$ can be realized via three steps of Pauli measurements on these $2n$ qubits using two $2n$-qubit CSS states.
\end{mytheorem}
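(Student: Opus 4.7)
The plan is to prove the theorem by reducing the GSF of the target -C- stage back to the idle GSF of Eq.~(\ref{eq:identity}) through three simultaneous Pauli measurements, each authorized by Lemma~\ref{lemma:multi_measurement}, and then reversing the order to obtain the forward implementation. Because such measurements are reversible at the GSF level (outcomes enter only as overall Pauli phases absorbed at the end), a ``backward'' reduction is equivalent to the ``forward'' protocol starting from $|0\rangle^{\otimes n}$ or $|+\rangle^{\otimes n}$ auxiliaries. I will start from the target GSF for invertible upper-triangular $U$ with auxiliaries in $|+\rangle^{\otimes n}$, then use the freedom to add the stabilizer row $(I_n\ {\bf 0}_n \mid {\bf 0}_n\ {\bf 0}_n)$ to the first logical row to obtain the equivalent form in Eq.~(\ref{eq:first_measure_before}). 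Two structural facts will be used repeatedly: $U+I_n$ has zero diagonal (since $U$ is invertible and upper-triangular), and $(U^t)^{-1}+I_n$ is lower-triangular with zero diagonal.

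In Round~1 I will measure the pure-$Z$ set $({\bf 0}_n\ {\bf 0}_n \mid I_n\ (U^t)^{-1}+I_n)$. Mutual commutativity is trivial since all rows are $Z$-type; the remaining three conditions of Lemma~\ref{lemma:multi_measurement}---symplectic inner product $I_n$ with the stabilizer row, and ${\bf 0}_n$ with both logical rows---follow from the identity $((U^t)^{-1}+I_n)U^t = I_n+U^t$ together with the upper-triangular structure of $U$. In Round~2 I will measure the Bell-basis set $(I_n\ I_n \mid {\bf 0}_n\ {\bf 0}_n)$. Here the orthogonality with the new stabilizer row produced by Round~1 is not manifest, and this is precisely where I will invoke Lemma~\ref{lemma:stabilizer_transform}: it rewrites that stabilizer in the equivalent form $({\bf 0}_n\ {\bf 0}_n \mid L_2\ L_3)$ with $L_2+L_3=I_n$, which makes the required symplectic inner product evaluate to $L_2+L_3=I_n$. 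Round~3 is a trivial bitwise-$X$ measurement of $(I_n\ {\bf 0}_n \mid {\bf 0}_n\ {\bf 0}_n)$ on the auxiliary qubits, completing the reduction to the identity GSF.

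Reversing the order, the forward protocol starts with $\text{A}_1,\dots,\text{A}_n$ in $|0\rangle^{\otimes n}$ and performs the three measurements in the reverse sequence, using the CSS ancillas $|\Omega_{\text{C}_1}\rangle$ of Eq.~(\ref{eq:Cnot_ancilla_1}) and $|\Omega_{\text{C}_2}\rangle$ of Eq.~(\ref{eq:Cnot_ancilla_2}) for the two nontrivial measurements, while the bitwise-$X$ step needs no ancilla. The $|+\rangle^{\otimes n}$ initial case will follow by exchanging the roles of $X$ and $Z$ bases in Rounds~1 and~3. I expect the main obstacle to be the Round~2 verification via Lemma~\ref{lemma:stabilizer_transform}---constructing the rewriting $L_2+L_3=I_n$ compatibly with the $U$-dependent stabilizer---after which the remaining checks reduce to routine binary linear algebra, yielding the claimed three-step protocol with exactly two $2n$-qubit CSS ancillas.
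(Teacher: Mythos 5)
Your proposal is correct and follows essentially the same route as the paper: the identical three measurement sets $\left(I_n \ I_n \,|\, {\bf 0}_n \ {\bf 0}_n\right)$, $\left({\bf 0}_n \ {\bf 0}_n \,|\, I_n \ (U^t)^{-1}+I_n\right)$, and the bitwise-$X$ step, with Lemma~\ref{lemma:multi_measurement} justifying each round, Lemma~\ref{lemma:stabilizer_transform} supplying the $L_2+L_3=I_n$ rewriting exactly where the paper uses it, and the same reversal argument with the ancillas $|\Omega_{\text{C}_1}\rangle$ and $|\Omega_{\text{C}_2}\rangle$. The only details you leave implicit (e.g., adding the new stabilizer rows to the logical rows between rounds so that conditions 3--4 of Lemma~\ref{lemma:multi_measurement} hold) are the same routine GSF equivalences the paper performs explicitly.
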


Note that the procedure to construct -C- stages via Pauli measurements in Theorem~\ref{thm:cnot_measurement} also works with  additional permutations
on qubits $\{\text{Q}_1,\dots, \text{Q}_n\}$.
Thus, -C- stages with symplectic matrices of the form $\pi M_C \pi^{-1}$ can also be computed using only three steps of measurements.

To implement an arbitrary Clifford circuit in the form of Eq.~(\ref{eq:used_bruhat}), one needs four -P- stages, four -C-  stages, one -H- stage, and permutations of qubits (which can be done in software by keeping  records). Overall, it requires five $4n$-qubit ancilla states of the form (\ref{eq:state_multi_measurement}) and eight $2n$-qubit CSS states. Crucially, the ancilla states are all equivalent to CSS states up to single-qubit Clifford gate operations. This gives us the main result of the paper:
\begin{mytheorem}\label{thm:main}
Any Clifford circuit on $n$ qubits can be implemented up to a qubit permutation by 22 steps of Pauli measurements, by introducing $n$ auxiliary qubits and preparing five $4n$-qubit stabilizer states and eight $2n$-qubit stabilizer state.
\end{mytheorem}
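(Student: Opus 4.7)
The plan is to invoke the Bruhat decomposition of Theorem~\ref{thm:bruhat} to express the target Clifford $M$ as the 9-stage product in Eq.~(\ref{eq:used_bruhat}), consisting of four -C- stages (of the form $\pi M_C \pi^{-1}$), four -P- stages, one -H- stage, and an overall permutation $\pi$. Then I would implement each stage independently using the constructions of the preceding subsections, feeding the output register of one stage into the input register of the next and recording all qubit permutations purely in software.

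More concretely, for each of the four -P- stages I would apply the two-step measurement protocol of the -P- subsection, consuming the $4n$-qubit ancilla (\ref{eq:phase_ancilla}); for the single -H- stage I would apply the analogous two-step protocol with the ancilla (\ref{eq:had_ancilla}); and for each of the four -C- stages I would apply Theorem~\ref{thm:cnot_measurement}, which uses three measurement steps and consumes the two $2n$-qubit CSS states (\ref{eq:Cnot_ancilla_1}) and (\ref{eq:Cnot_ancilla_2}) (with the second one depending on the particular upper-triangular $U$ of that stage). Summing gives $4\times 2 + 1\times 2 + 4\times 3 = 22$ measurement steps, $4 + 1 = 5$ of the $4n$-qubit ancillas, and $4\times 2 = 8$ of the $2n$-qubit ancillas, matching the counts in the statement. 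Each of these ancillas is a CSS state up to single-qubit Clifford gates, as already observed in the earlier subsections.

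To glue the nine stages cleanly, I would use the fact that every stage swaps (logically) the data and auxiliary registers, and that this swap is simply a software relabeling of the qubit indices. Moreover, each stage leaves the auxiliary register in a standard product state ($|0\rangle^{\otimes n}$ or $|+\rangle^{\otimes n}$, up to Pauli corrections) which is exactly the initialization required by the next stage in the construction. The Pauli corrections produced by each round of measurements propagate through subsequent Clifford operations by the standard Heisenberg update and can be collected into a single Pauli layer at the end, which is free in the conventions of Sec.~\ref{sec:stabilizer_circuit}. The intermediate permutations in (\ref{eq:used_bruhat}) and the final $\pi$ are never physically applied but only bookkept, accounting for the ``up to a qubit permutation'' qualifier in the statement.

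The step I expect to require the most care is the -C- subcircuits of the form $\pi M_C \pi^{-1}$ appearing in the Bruhat decomposition, because Theorem~\ref{thm:cnot_measurement} is stated for CNOTs $\text{C}(j,l)$ with $j<l$ rather than arbitrary indices. The remark following that theorem asserts that conjugation by a permutation is compatible with the construction, so the main obstacle reduces to verifying that this conjugation can be absorbed into the software permutation layer without altering the three-step measurement structure or the CSS character of the ancillas. Once this consistency check is carried out, the stage-by-stage counts above immediately yield the theorem.
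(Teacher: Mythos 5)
Your proposal is correct and follows essentially the same route as the paper: the Bruhat 9-stage decomposition of Theorem~\ref{thm:bruhat}, two measurement steps with one $4n$-qubit ancilla per -P- or -H- stage, three steps with two $2n$-qubit CSS states per -C- stage via Theorem~\ref{thm:cnot_measurement}, software bookkeeping of permutations and Pauli corrections, giving $4\times2+2+4\times3=22$ steps, five $4n$-qubit and eight $2n$-qubit ancillas. The point you flag about $\pi M_C \pi^{-1}$ is handled in the paper exactly as you suggest, by the remark following Theorem~\ref{thm:cnot_measurement}.
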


Like gate teleporation~\cite{Gottesman:1999:390}, this theorem implies that the gate complexity of the circuit is now completely dominated by the preparation of these CSS states. These resource states can be prepared using a stabilizer circuit of $O(n^2)$ gates with depth $O(n)$. On the other hand, note that these CSS states are equivalent to two-colorable graph states up to local Clifford operations. Thus, they can be approximated as non-degenerate ground states of two-body Hamiltonians~\cite{van2008graph,darmawan2014graph}. This fact may help to prepare these CSS states in an adiabatic manner.

Note that the total number of qubits used increases by a factor of five. The protocol can save the real computation time of stabilizer circuits by off-line preparation of CSS states.

\section{Fault-tolerant Ancilla States preparation}
Quantum states are vulnerable to noise, which is the main obstacle to building large-scale quantum computers. The solution requires encoding the quantum state into some quantum error correcting code and performing fault-tolerant quantum computation; see~\cite{QECbook:2013} for details.

The Pauli measurement circuit in Fig.~\ref{fig:measurement} is naturally compatible with Steane syndrome extraction when $2n$ qubits $\text{A}_1,\dots,\text{A}_n,\text{Q}_{1},\dots \text{Q}_n$ are encoded in an $[[n', k\geq2n, d]]$ CSS code $\mathcal{Q}$, while $4n$ additional qubits are also encoded into two blocks of the same code $\mathcal{Q}$. That is to say, the syndrome measurements and Pauli measurements can be done simultaneously through the circuit in Fig.~\ref{fig:measurement} at the logical level.

The ancilla states required in this paper, (\ref{eq:phase_ancilla}),~(\ref{eq:had_ancilla}),~(\ref{eq:Cnot_ancilla_1}) and~(\ref{eq:Cnot_ancilla_2}), are all CSS states up to single-qubit Clifford operations. Fortunately, when encoded in $\mathcal{Q}$, these states can all be distilled fault-tolerantly using a structure based on classical error-correcting codes, with a constant overhead for the purpose of FTQC~\cite{zheng2017efficient}. In such a scenario, one can fault-tolerantly compute Clifford circuits in $O(1)$ steps.
In addition, the complexity of state preparation here is at the \emph{physical} level rather than the \emph{logical} level. Thus our method of Clifford circuit computation can have a speedup up by a factor of up to $O(d)$ compared to implementing the circuits directly on the data block in a code deformation manner.


\section{Discussion and Conclusions}\label{sec:discussion}


In this paper, we proposed a method to compute Clifford circuits by  a constant number of steps of Pauli measurement, which is suitable for Steane syndrome measurements. Consequently, the depth of the circuit is reduced to $O(1)$. It requires $n$ auxiliary qubits and preparation of five $4n$-qubit and eight $2n$-qubit CSS states.

The gate complexity is then completely transferred into the off-line preparation of these CSS states. The overall gate complexity, including the ancilla preparation, is $O(n^2)$ with a depth of $O(n)$. It seems initially at least as difficult as   the direct implementation. However, preparing these known states is much simpler than implementing gate operations on unknown states. At the physical level, these states can be prepared via a gap-protected adiabatical process. When encoding in CSS codes, they can be prepared by distillation and postselection, with a constant overhead in practice, which is especially interesting in the scenario of FTQC using multi-qubit CSS codes~\cite{brun2015teleportation}.
There, in general, one cannot find a fault-tolerant way to directly implement gates on logical qubits for a given multi-qubit CSS codes.
By contrast, measurements of logical Pauli operators are very easy for any CSS codes by fault-tolerantly preparing logical CSS states (up to single-qubit Clifford operation) with only constant resource overhead~\cite{Ancilla_distillation_1,zheng2017efficient}.

The method of this paper implies that the number of different encoded ancilla states can be reduced from $O(n^2/\log n)$ to $O(1)$ for a given Clifford circuit, which can greatly simplify the fault-tolerant preparation procedure through high throughput distillation, if the number of Clifford circuits required is limited. This makes FTQC based on high rate multi-qubit error correcting codes more promising.


%

\acknowledgments
The funding support from the National Research Foundation \& Ministry of Education, Singapore, is acknowledged. This work is also supported by the National Research Foundation of Singapore and Yale-NUS College (through grant number IG14-LR001 and a startup grant). TAB was supported by NSF Grants No. CCF-1421078 and No. MPS-1719778, and by an IBM Einstein Fellowship at the Institute for Advanced Study.

\newpage
\bibliographystyle{apsrev4-1}
%

\end{document}